\documentclass[11pt,letterpaper]{article}

\usepackage[margin=1in]{geometry}
\usepackage[T1]{fontenc}
\usepackage[utf8]{inputenc}
\usepackage{mathpazo}
\usepackage{microtype}
\usepackage{needspace}
\usepackage{amsmath,amssymb,amsthm,mathtools}
\usepackage[round,authoryear]{natbib}
\usepackage{xcolor}
\usepackage{aliascnt}
\usepackage{hyperref}
\usepackage[nameinlink,capitalize,noabbrev]{cleveref}

\hypersetup{
  pdftitle={Subquadratic Subsidies for Nonnegative or Nonpositive Valuations},
  pdfauthor={Max Dupr\'e la Tour and Mashbat Suzuki},
  colorlinks=true,
  linkcolor=blue!45!black,
  citecolor=blue!45!black,
  urlcolor=blue!45!black
}

\newtheorem{theorem}{Theorem}
\newaliascnt{lemma}{theorem}
\newtheorem{lemma}[lemma]{Lemma}
\aliascntresetthe{lemma}
\newaliascnt{proposition}{theorem}
\newtheorem{proposition}[proposition]{Proposition}
\aliascntresetthe{proposition}
\theoremstyle{remark}
\newaliascnt{remark}{theorem}

\aliascntresetthe{remark}

\newcommand{\R}{\mathbb{R}}
\newcommand{\E}{\mathbb{E}}
\newcommand{\Prb}{\mathbb{P}}
\newcommand{\defeq}{\coloneqq}
\newcommand{\calA}{\mathcal{A}}
\newcommand{\calB}{\mathcal{B}}
\newcommand{\wt}{\operatorname{wt}}

\title{Subquadratic Subsidies for Nonnegative or Nonpositive Valuations}
\author{Max Dupr\'e la Tour\thanks{RIKEN Center for Advanced Intelligence Project; The University of Tokyo; \texttt{maxduprelatour@gmail.com}} \and Mashbat Suzuki\thanks{UNSW Sydney; \texttt{mashbats@gmail.com}}}
\date{}

\begin{document}

\maketitle

\begin{abstract}
We study envy-freeness with subsidies for indivisible items beyond additive valuations. Assuming that every single-item marginal value lies in $[-1,1]$, we prove that a total subsidy of $O(n^{3/2}\sqrt{\log n})$ suffices to achieve envy-freeness among $n$ agents whenever all agents assign nonnegative values to every bundle or all assign nonpositive values to every bundle. These valuation classes include monotone goods and monotone chores, respectively, but do not require monotonicity. Our result establishes the first subquadratic total-subsidy bound for general monotone valuations that holds for every number of agents.
\end{abstract}

\section{Introduction}

How can we divide indivisible items fairly among agents with different preferences? An allocation is \emph{envy-free} if every agent values her own bundle at least as much as any other agent's bundle. Such an allocation need not exist: if two agents both value a single item positively, the agent who does not receive it envies the one who does. Monetary subsidies can compensate for these differences. If agent $i$ receives bundle $A_i$ and subsidy $p_i\geq0$, envy-freeness requires
\[
  v_i(A_i)+p_i\geq v_i(A_j)+p_j
  \qquad\text{for all agents }i,j.
\]
The subsidies are paid by an external source. We seek an allocation and subsidies that satisfy these inequalities at the smallest possible total cost.

Any bound on this cost must fix the scale of the valuations. For additive valuations, the usual assumption is that each item has absolute value at most one. For general valuations, we require that adding or removing a single item changes a bundle's value by at most one. We also normalize the value of the empty bundle to zero. The central question is then:
\begin{quote}
How much total subsidy suffices to guarantee envy-freeness in the worst case?
\end{quote}

\citet{HalpernShah2019} initiated the modern algorithmic study of this question. For additive goods, \citet{BrustleEtAl2020} proved that one unit of subsidy per agent suffices. Subtracting the smallest payment preserves envy-freeness and gives a total of at most $n-1$. This is tight: with one item valued at one by every agent, each of the $n-1$ nonrecipients needs at least one unit of subsidy. \citet{lu2026optimalsubsidyboundsgoods} extended this guarantee to mixed additive valuations, where an item may be a good for some agents and a chore for others.

For general monotone valuations, the best total-subsidy bounds that apply to every number of agents are substantially larger. \citet{BrustleEtAl2020} proved that a total subsidy of $2(n-1)^2$ suffices, and \citet{KawaseEtAl2024} improved this bound to $(n^2-n-1)/2$ for $n\geq 3$. \citet{LiuEtAl2024} asked whether a subquadratic bound is possible. A partial answer was given by \citet{DupreFujii2025}, who established a total-subsidy bound of $O(n^{3/2}\sqrt{\log n})$ for arbitrary valuations, but only when the number of agents is a prime power. Thus, obtaining a subquadratic bound for an arbitrary number of agents remained open.

\subsection{Our contribution}

We resolve this question by removing the prime-power restriction: for every number $n$ of agents, a total subsidy of $O(n^{3/2}\sqrt{\log n})$ suffices. Our result applies whenever all agents assign nonnegative values to every bundle or all assign nonpositive values to every bundle. These two classes include monotone goods and monotone chores, respectively, but neither requires monotonicity.

\begin{theorem}\label{thm:main}
Let $N=[n]$ be a set of agents and let $M$ be a finite set of indivisible items. Suppose that each valuation $v_i:2^M\to\R$ satisfies $v_i(\emptyset)=0$ and
\begin{equation}\label{eq:marginal-bound-intro}
  \bigl|v_i(S\cup\{g\})-v_i(S)\bigr|\leq1
  \qquad \forall S\subseteq M, \text{ and } \forall g\in M\setminus S.
\end{equation}
Assume that either $v_i(S)\geq0$ for every $i,S$, or $v_i(S)\leq0$ for every $i,S$. Then there exist an allocation $\calA=(A_1,\ldots,A_n)$ and nonnegative subsidies $(p_1,\ldots,p_n)$ that are envy-free and satisfy
\begin{equation}\label{eq:main-bound}
  \sum_{i\in N}p_i
  \leq 5\sqrt{2}\,(n-1)^{3/2}\sqrt{\log(4n)}.
\end{equation}
\end{theorem}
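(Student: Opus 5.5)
Throughout I use the standard reduction of \citet{HalpernShah2019}. For an allocation $\calA$, build the complete weighted envy graph with arc weights $w(i,j)=v_i(A_j)-v_i(A_i)$. The allocation is envy-freeable if and only if this graph has no positive-weight cycle. In that case the subsidies $p_i=\ell(i)$, where $\ell(i)$ is the maximum weight of a path starting at $i$, make $\calA$ envy-free, and the total subsidy is $\sum_i \ell(i)$.

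\textbf{Step 1: reduce to bounding path weights.} Among all allocations, take one maximizing the utilitarian welfare $\sum_i v_i(A_i)$. Then, for every permutation $\sigma$ of the bundles, we have $\sum_i v_i(A_{\sigma(i)})\leq \sum_i v_i(A_i)$, so the graph has no positive cycle. It then suffices to show that every path has weight $O(\sqrt{n\log n})$. Summing over the $n-1$ agents who receive a nonzero subsidy then gives the bound \eqref{eq:main-bound}.

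Welfare maximality alone only gives path weights of order $n$, as in the quadratic bounds. So I would instead pick the allocation with a probabilistic, discrepancy-style argument, following the approach of \citet{DupreFujii2025} but replacing their prime-power algebraic structure by randomness.

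\textbf{Step 2: bound a path by a balancing quantity.} Consider a path $i_1\to\cdots\to i_k$. By welfare-maximality, rerouting the bundles along the path and closing the cycle shows that the path weight is at most the loss incurred by agent $i_k$ when taking $A_{i_1}$ instead of $A_{i_k}$. Hence it suffices to ensure two things:
\begin{itemize}
\item every agent values every bundle within $O(\sqrt{n\log n})$ of her own bundle, and
\item welfare-maximality holds (or approximately holds) simultaneously.
\end{itemize}

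The sign assumption enters here. In the nonnegative case, $v_i(A_j)\leq |A_j|$ by the marginal bound, and nonnegativity prevents a bundle from being arbitrarily negative. So large envy can only come from bundles with many items, and it is natural to work with bundle sizes. The nonpositive case is symmetric, replacing $v$ with $-v$ and reversing the roles of the bundles.

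\textbf{Step 3: the randomized construction.} I would do the following.
\begin{enumerate}
\item Start from a welfare-maximizing allocation.
\item Partition each bundle randomly into $n$ pieces via independent uniform labels.
\item Redistribute the pieces using a cyclic Latin-square pattern, so that each agent receives one random piece of every original bundle. This is the step that replaces the finite-field construction used for prime powers; cyclic shifts exist for every $n$.
\end{enumerate}
Each $v_i(A_j)$ is then a function of independent labels with bounded differences (each label changes it by at most $1$). By McDiarmid's inequality together with a union bound over the $n^2$ pairs, every $v_i(A_j)$ concentrates within $\sqrt{2|M|\log(4n)}$-type deviations of its mean, and the means are equal across bundles by symmetry. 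The deviation scale must then be reduced from $|M|$ to $n$. I would do this with a preprocessing step: using the nonnegativity assumption, iteratively bundle items until only about $n$ "effective" units of value matter, since values beyond that level can be handled by welfare exchange.

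\textbf{Main obstacle.} The hard part is combining the random redistribution with envy-freeability: the randomized allocation need not be welfare-maximal, so positive cycles may appear. My first attempt would be to redefine the subsidy via a potential, namely the expected values, and bound the cycle weights by the concentration error. This yields an allocation whose graph has all cycles of weight at most $O(\sqrt{n\log n})$ per arc. I would then apply a final local-search improvement, cycle-cancelling, which only increases welfare and keeps the deviation bounds. The danger is that cycle-cancelling destroys concentration; controlling this, perhaps by cancelling only cycles of length bounded by a potential argument, is where I expect the real work, and the constant $5\sqrt{2}$ to emerge.
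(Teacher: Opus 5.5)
There is a genuine gap: the steps that carry your argument either fail or are left open. First, the sufficient condition of Step 2 cannot be met in general. Take $v_1(S)=|S|$ and $v_i\equiv0$ for $i\geq2$. The unique welfare-maximal allocation gives all $m$ items to agent $1$, whose loss from switching to an empty bundle is $m$, even though zero subsidy suffices. So the two requirements of Step 2 cannot hold together here. Second, your random construction discards the starting allocation. With independent uniform labels and a cyclic redistribution, each item gets an independent, uniformly random recipient. All $m$ items are therefore randomized, and McDiarmid only gives deviations of order $\sqrt{m\log n}$. Merging items into ``effective units'' does not repair this: a group of $r$ items can shift a value by $r$, so the bounded-differences proxy satisfies $\sum_\ell r_\ell^2\geq m^2/h$ and still grows with $m$. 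Third, a random allocation may contain positive-weight cycles, and then no envy-free subsidies exist at all. The cycle-cancelling repair you leave open is exactly the missing part. Finally, $v\mapsto -v$ is not a symmetry of the problem. It reverses the envy-freeness inequalities, so the nonpositive case needs its own argument.

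The paper avoids all of these problems. It randomizes a \emph{partition} and always allocates it by a maximum-weight matching, so envy-freeability is automatic. What must then be controlled is the canonical price $q_k$ of each slot, the maximum weight of a walk ending at $k$ in the envy graph. Since $\min_kq_k=0$, the subsidies $q_{\max}-q_{\sigma(i)}$ cost at most $(n-1)q_{\max}$. The paper proves the identity $q_k=W_k^+-W$, where $W_k^+$ is the optimal welfare when an extra copy of $B_k$ is offered. This identity implies that each $q_k$ changes by at most $5$ when one item moves, even though the optimal matching and the envy graph may change. Randomness is confined to at most $n-1$ items: the items are placed as unit intervals in $[0,m]$, the interval is cut into $n$ pieces, and only items cut in their interior are rounded. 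The cut positions are chosen with the KKM theorem so that all expected prices are equal. This is where the sign assumption enters. An empty slot has price $0$ for nonnegative valuations and the maximum price for nonpositive ones, which supplies the KKM boundary condition (applied to $\max_jQ_j-Q_k$ in the nonpositive case). McDiarmid's inequality over $h\leq n-1$ items with constant $5$, a union bound over the $n$ slots, and $\min_kq_k=0$ then give $q_{\max}\leq5\sqrt{2(n-1)\log(4n)}$.
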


The proof also bounds each agent's subsidy by $5\sqrt{2(n-1)\log(4n)}$. The guarantee is existential: the argument identifies a suitable partition through a topological existence theorem, rather than a polynomial-time algorithm.

\subsection{Proof overview}

The proof combines supporting bundle prices, a topological equalization argument, and randomized rounding. The key is to equalize expected prices in a distribution that randomizes only a small number of items.

\paragraph{Bundles, prices, and allocations.}
Fix an unordered partition $\calB=\{B_1,\ldots,B_n\}$ of the items, allowing empty bundles. For matching purposes, give each bundle occurrence an arbitrary label $j\in[n]$. We call the corresponding matching position \emph{slot $j$}; its contents are the bundle $B_j$. These labels specify neither an order on the partition nor its recipients. A maximum-weight perfect matching $\sigma$ between agents and slots, with edge weights $v_i(B_j)$, gives the allocation $A_i=B_{\sigma(i)}$.

We seek bundle prices that support this matching:
\[
  v_i(B_{\sigma(i)})-q_{\sigma(i)}\geq v_i(B_k)-q_k
  \qquad\text{for all }i,k.
\]
At these prices, every agent receives a bundle that maximizes her value
minus its price, and all bundles are assigned. We use the coordinatewise
least nonnegative supporting price vector, which we call the
\emph{canonical prices}. For fixed slot labels, these prices depend
only on the partition, not on the choice of maximum-weight perfect
matching, and satisfy $\min_k q_k=0$.

Give every agent a common allowance $q_{\max}=\max_jq_j$ and charge the price of her assigned bundle. Her net subsidy is $p_i=q_{\max}-q_{\sigma(i)}\geq0$, and the resulting outcome is envy-free. At least one agent receives zero subsidy, so the total is at most $(n-1)q_{\max}$. It therefore suffices to find a partition with a small spread in its canonical prices.

The prices also have a classical welfare interpretation~\citep{GulStacchetti1999}. Let $W$ be the maximum welfare obtainable from the original bundles via maximum weight matching between agents and the bundles. Introduce an additional slot offering a formal copy of $B_k$, and let $W_k^+$ be the maximum welfare when all $n$ agents are matched to distinct slots, leaving one of the $n+1$ slots unused. This is an auxiliary matching problem, not a duplication of the actual items. We give a direct proof that the canonical prices satisfy
\[
  q_k=W_k^+-W.
\]
This identity controls how prices change when an item moves between bundles: $W$ changes by at most $2$ and $W_k^+$ by at most $3$, so each $q_k$ changes by at most $5$. 

\paragraph{Equalizing expected prices.}
Arrange the items as consecutive unit intervals and cut the resulting interval into $n$ \emph{pieces}. A piece is a geometric interval used to specify probabilities. Associate piece $k$ with slot $k$, and assign each whole item independently to a slot with probability equal to the fraction of its interval lying in the corresponding piece. The items assigned to slot $k$ form its random bundle. Only items cut in their interior have random destinations, so at most $n-1$ items are randomized.

Let $Q_k$ be the expected canonical price of the random bundle in slot $k$. These functions vary continuously with the cut positions. We use the KKM lemma to choose the cuts so that
\[
  Q_1=\cdots=Q_n.
\]
This is where the sign assumption enters. A zero-length piece produces an empty bundle with probability one. For nonnegative valuations, an empty bundle has price zero, so $Q_k=0$ whenever piece $k$ has length zero. This is the boundary condition needed for a direct application of KKM. For nonpositive valuations, an empty bundle has maximum price. We therefore apply the same argument to the nonnegative functions $\max_jQ_j-Q_k$, which again vanish when piece $k$ has length zero.

\paragraph{Rounding.}
Let $\mu$ be the common expected price, that is, $\mu=Q_i$  for each $i\in [n]$. Since at most $n-1$ items are randomized and each changes a price by at most $5$, McDiarmid's inequality and a union bound give a partition in which every price lies within $O(\sqrt{n\log n})$ of $\mu$. The price spread has the same order. Since the minimum canonical price is zero, the maximum price is also $O(\sqrt{n\log n})$, yielding the total-subsidy bound of $O(n\sqrt{n\log n})$.

\subsection{Related work}

\paragraph{Subsidies beyond additivity.}
Linear subsidy bounds are known for several restricted nonadditive classes. For matroid-rank valuations, \citet{GokoEtAl2024} obtained a total bound of $n-1$ and studied additional truthfulness and efficiency guarantees. \citet{BarmanEtAl2022} proved the same tight total bound for all dichotomous valuations, whose single-item marginals lie in $\{0,1\}$, without assuming submodularity. \citet{LiEtAl2025} established an $n-1$ bound for general monotone valuations in graphical allocation instances, where each item connects two agents, is valued only by those agents, and must be assigned to one of them. Our result places no such restrictions on the monotone valuations or on which agents may receive an item.

For broader valuation classes, \citet{KawaseEtAl2024} showed how to obtain an envy-free outcome with total subsidy at most $n(n-1)/2$ from an allocation that is envy-free up to one item. Their result yields this quadratic guarantee for doubly monotone valuations, where each item is consistently a good or a chore for each agent. In particular, it covers monotone chores, which are included in our nonpositive setting. The survey of \citet{LiuEtAl2024} gives a broader account of fair division with subsidies and explicitly asks for a subquadratic bound for monotone valuations in its Open Question~9.

\paragraph{Discrepancy and rounding.}
For a prime-power number $n$ of agents, \citet{DupreFujii2025} established a total-subsidy bound of $O(n^{3/2}\sqrt{\log n})$ for arbitrary valuations with single-item marginals in $[-1,1]$. Their proof relies on a discrepancy theorem for non-additive set functions, obtained by extending the constrained necklace-splitting argument of \citet{jojic2021splitting}. The prime-power restriction stems from this topological ingredient, leaving open whether a subquadratic subsidy bound holds for every number of agents, even for monotone goods. 

The two results are complementary: their subsidy theorem allows arbitrary signs but requires a prime-power number of agents, whereas ours uses a different topological argument to obtain the same asymptotic bound for every number of agents, provided that all agents' bundle values are nonnegative or all are nonpositive.

\paragraph{Topological fair division.}

The role of empty bundles has a close parallel in topological fair division. In \citet{Su1999}, the assumption that agents either always prefer nonempty pieces (hungry preferences) or never do so (lazy preferences) provides the boundary condition for the Sperner argument establishing an envy-free cake division into connected pieces. \citet{avvakumov2021envy} show that, without these preference assumptions, existence is still guaranteed when the number of agents is a prime power, but may fail otherwise. In our setting, the sign assumptions play an analogous role, supplying boundary conditions for expected prices rather than for preferences over cake pieces.

\paragraph{Assignment markets.}
For a fixed partition, the supporting-price problem is closely related to classical assignment markets and multi-item auctions~\citep{ShapleyShubik1971,DemangeGaleSotomayor1986}. We use the marginal-welfare characterization of minimum prices~\citep{GulStacchetti1999} to establish one-item stability, with a self-contained derivation for mandatory assignment and possibly negative bundle values.

\section{Model and preliminaries}\label{sec:preliminaries}

Let $N=[n]=\{1,\ldots,n\}$ be the agents and let $M=\{g_1,\ldots,g_m\}$ be a set of items. Each agent $i$ has a valuation $v_i:2^M\to\R$ with $v_i(\emptyset)=0$. Throughout, we assume the marginal bound
\begin{equation}\label{eq:marginal-bound}
  \bigl|v_i(S\cup\{g\})-v_i(S)\bigr|\leq1
  \qquad\text{for all }i\in N,\ S\subseteq M,\ g\notin S.
\end{equation}
We consider two classes of valuations: \emph{nonnegative valuations}, for which $v_i(S)\geq0$ for all $i,S$, and \emph{nonpositive valuations}, for which $v_i(S)\leq0$ for all $i,S$. Each class allows nonmonotone valuations. For example, with two items $a,b$, the valuation $v(\emptyset)=v(\{a,b\})=0$ and $v(\{a\})=v(\{b\})=1$ is nonnegative and has marginals in $[-1,1]$, but is not monotone. Monotone goods satisfy $v_i(S)\leq v_i(T)$ whenever $S\subseteq T$ and are nonnegative under our normalization. Monotone chores satisfy the reverse inequality and are nonpositive.

\paragraph{Bundles and partitions.}
A \emph{bundle} is a subset of $M$. A \emph{partition into $n$ bundles} is an unordered collection $\calB=\{B_1,\ldots,B_n\}$ of pairwise disjoint bundles whose union is $M$. We allow repeated empty bundles and count each occurrence separately, so the collection has $n$ members counted with multiplicity. Formally, this is a multiset; throughout, braces denote the unordered collection, including any repeated empty bundles.

\paragraph{Slots and labels.}
To represent a partition in a matching problem, choose an arbitrary labeling of its $n$ bundle occurrences by $[n]$. A \emph{slot} is one of these labeled matching positions: slot $j$ contains bundle $B_j$. Slots remain distinct even when their bundles are empty. The labels are auxiliary; they impose neither an order on the partition nor an assignment to agents. We keep them fixed when comparing or randomizing partitions. Indexed quantities such as $q_j(\calB)$ are understood relative to this chosen labeling, which we suppress in the notation. 

\paragraph{Matchings and allocations.}
Given a partition $\calB$ and its slot labels, a bijection $\sigma:N\to[n]$ matches agents to slots. It induces the \emph{allocation} $\calA=(A_i)_{i\in N}$ defined by
\[
  A_i=B_{\sigma(i)}\qquad\text{for each }i\in N.
\]
Thus $B_j$ is the bundle occupying slot $j$, whereas $A_i$ is the bundle received by agent $i$.  An \emph{empty slot} contains the empty bundle and can still be assigned to an agent; an \emph{unused slot} has no agent matched to it. In a perfect matching every slot is used, including the empty ones. For a fixed partition, we use a maximum-weight perfect matching with edge weights $v_i(B_j)$. The maximum welfare is
\begin{equation}\label{eq:welfare}
  W(\calB)\defeq\max_{\sigma\in S_n}\sum_{i\in N}v_i(B_{\sigma(i)}),
\end{equation}
where $S_n$ is the set of bijections between agents and bundle slots.

\paragraph{Prices and subsidies.}
Prices are indexed by slots: $q_j(\calB)$ is the price attached to the bundle $B_j$ in slot $j$. Subsidies are indexed by agents: $p_i\geq0$ is the payment received by agent $i$. An allocation $\calA$ and subsidies $p$ are \emph{envy-free} if
\begin{equation}\label{eq:ef}
  v_i(A_i)+p_i\geq v_i(A_\ell)+p_\ell
  \qquad\text{for every }i,\ell\in N.
\end{equation}
Given supporting prices and an optimal matching $\sigma$, we will use the allocation $A_i=B_{\sigma(i)}$ and payments $p_i=\max_jq_j(\calB)-q_{\sigma(i)}(\calB)$.

We use two standard tools. The first is the KKM theorem, which we state on the simplex
\[
  \Delta=\bigl\{x\in\R_{\geq0}^n:\textstyle\sum_kx_k=1\bigr\}.
\]
For nonempty $S\subseteq[n]$, let $\Delta_S=\{x\in\Delta:x_k=0\text{ for }k\notin S\}$ be the face spanned by $S$.

\begin{theorem}[KKM theorem \citep{KKM1929}]\label{thm:kkm}
Let $C_1,\ldots,C_n$ be closed subsets of $\Delta$. Suppose that $\Delta_S\subseteq\bigcup_{k\in S}C_k$ for every nonempty $S\subseteq[n]$. Then $\bigcap_{k\in[n]}C_k\neq\emptyset$.
\end{theorem}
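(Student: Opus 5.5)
The plan is to derive the KKM theorem from Sperner's lemma by a fine-triangulation-and-compactness argument; the covering hypothesis supplies the boundary condition for Sperner's lemma, and closedness lets us pass to a limit. First I fix a sequence of triangulations $T^{(t)}$ of $\Delta$ whose mesh tends to zero, for instance iterated barycentric subdivisions. For each vertex $x$ of $T^{(t)}$, let $S(x)=\{k:x_k>0\}$ be its support, so that $x\in\Delta_{S(x)}$. The hypothesis $\Delta_{S(x)}\subseteq\bigcup_{k\in S(x)}C_k$ then lets me choose a label $\lambda(x)\in S(x)$ with $x\in C_{\lambda(x)}$.

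This labeling is a Sperner labeling. Every vertex lying on a face $\Delta_S$ has support contained in $S$, so its label lies in $S$; in particular the vertex $e_k$ receives label $k$. Sperner's lemma then gives a fully labeled cell $\tau^{(t)}$ of $T^{(t)}$, meaning that its vertices $x^{(t)}_1,\dots,x^{(t)}_n$ satisfy $x^{(t)}_k\in C_k$ after reindexing.

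For the limit step, $\Delta$ is compact, so along a subsequence $x^{(t)}_1\to x^*$. Since the diameter of $\tau^{(t)}$ tends to zero, every $x^{(t)}_k\to x^*$ as well. Each $C_k$ is closed and contains the sequence $x^{(t)}_k$, so $x^*\in C_k$ for every $k$, and hence $x^*\in\bigcap_k C_k$.

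The only substantive step is Sperner's lemma. I would either cite it as standard or prove it by the usual induction on dimension: one counts, modulo 2, the $(n-2)$-faces labeled $\{1,\dots,n-1\}$. Each cell contributes an odd number of such faces exactly when it is fully labeled. Interior faces are counted twice, while the boundary faces all lie on $\Delta_{[n-1]}$, where induction gives an odd count. A fully labeled cell therefore exists.

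An alternative route goes through Brouwer's fixed point theorem. Assuming $\bigcap_k C_k=\emptyset$, one uses the functions $d(x,C_k)$ to define a continuous self-map of $\Delta$ without fixed points. I expect the Sperner route to be cleaner, since the labeling condition matches the covering hypothesis directly.
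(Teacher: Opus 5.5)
Your proof is correct. The support-based labeling is a valid Sperner labeling. Your mod-2 count proves Sperner's lemma: a fully labeled cell has one facet labeled $\{1,\dots,n-1\}$ and any other cell has two or zero, interior facets are shared by two cells, and boundary facets with these labels lie only in $\Delta_{[n-1]}$. The mesh-to-zero compactness step together with closedness of the $C_k$ then gives a point of $\bigcap_k C_k$.

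The paper does not prove this statement. It uses KKM as a black box, citing Knaster, Kuratowski, and Mazurkiewicz (1929), whose original argument is this same Sperner-plus-limit route, so there is no different route to compare.
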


The second tool bounds the deviation of a function of independent random choices when no single choice has much influence.

\begin{lemma}[McDiarmid's inequality \citep{McDiarmid1989}]\label{lem:mcdiarmid}
Let $X_1,\ldots,X_h$ be independent random variables, where $h\geq1$, and let $f=f(X_1,\ldots,X_h)$. Suppose that changing one coordinate changes $f$ by at most $L>0$. Then, for every $t>0$,
\begin{equation}\label{eq:mcdiarmid}
  \Prb\bigl[|f-\E f|\geq t\bigr]
  \leq 2\exp\!\left(-\frac{2t^2}{hL^2}\right).
\end{equation}
\end{lemma}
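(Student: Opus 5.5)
We will prove the inequality by the standard martingale (Doob) argument combined with Hoeffding's lemma. Define the filtration $\mathcal{F}_k=\sigma(X_1,\ldots,X_k)$ for $k=0,\ldots,h$, and the Doob martingale $Z_k=\E[f\mid\mathcal{F}_k]$. Then $Z_0=\E f$ and $Z_h=f$. Write $f-\E f=\sum_{k=1}^h D_k$ with increments $D_k=Z_k-Z_{k-1}$.

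The first key step is to show that, conditionally on $\mathcal{F}_{k-1}$, each increment $D_k$ lies in an interval of length at most $L$. Independence gives an explicit formula. For fixed $x_1,\ldots,x_{k-1}$, set
\[
  g_k(x_1,\ldots,x_k)=\E\bigl[f(x_1,\ldots,x_k,X_{k+1},\ldots,X_h)\bigr],
\]
so that $Z_k=g_k(X_1,\ldots,X_k)$ and $Z_{k-1}=\E_{X_k}[g_k(X_1,\ldots,X_{k-1},X_k)]$. Here independence is used: conditioning on $\mathcal{F}_{k-1}$ does not change the law of $X_k,\ldots,X_h$. Changing $x_k$ alone changes $f$ pointwise by at most $L$, hence changes $g_k$ by at most $L$. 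Therefore
\[
  a_k\defeq\inf_{x}g_k(\ldots,x)-Z_{k-1}\;\leq\; D_k\;\leq\; b_k\defeq\sup_x g_k(\ldots,x)-Z_{k-1},
\]
where $a_k,b_k$ are $\mathcal{F}_{k-1}$-measurable and satisfy $b_k-a_k\leq L$. Moreover, $\E[D_k\mid\mathcal{F}_{k-1}]=0$.

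The second step applies Hoeffding's lemma. A mean-zero random variable supported in an interval of length $L$ satisfies $\E[e^{\lambda D}]\leq e^{\lambda^2L^2/8}$. Applying this conditionally yields $\E[e^{\lambda D_k}\mid\mathcal{F}_{k-1}]\leq e^{\lambda^2L^2/8}$. Iterating with the tower property from $k=h$ down to $k=1$ gives
\[
  \E\bigl[e^{\lambda(f-\E f)}\bigr]\leq e^{\lambda^2hL^2/8}\qquad\text{for all }\lambda\in\R.
\]

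Finally, we apply the Chernoff bound. Markov's inequality gives $\Prb[f-\E f\geq t]\leq\exp(-\lambda t+\lambda^2hL^2/8)$. Choosing $\lambda=4t/(hL^2)$ yields $\exp(-2t^2/(hL^2))$. The same argument applied to $-f$, which also has bounded differences $L$, controls the lower tail. A union bound over the two tails produces the factor $2$ in \eqref{eq:mcdiarmid}.

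The main obstacle is the first step. We must obtain a conditional range of length $L$, not merely $|D_k|\leq L$, since the latter only gives the weaker constant $1/2$ instead of $2$ in the exponent. This is exactly where independence is needed, to write $Z_{k-1}$ as an average of $g_k$ over $X_k$. Hoeffding's lemma itself is classical. If a self-contained proof is wanted, we would prove it by convexity, bounding $e^{\lambda x}$ by the chord on $[a,b]$, and then checking that the logarithm $\varphi(u)$ of the resulting bound satisfies $\varphi(0)=\varphi'(0)=0$ and $\varphi''\leq 1/4$.
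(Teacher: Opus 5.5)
Your proof is correct. It is the standard argument: the Doob martingale, an interval of length $L$ for each increment obtained from independence, Hoeffding's lemma, and the Chernoff bound with $\lambda=4t/(hL^2)$, giving exactly the exponent $2t^2/(hL^2)$. The paper gives no proof of its own and cites McDiarmid's 1989 survey, where the inequality is proved in essentially this way.

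One small point: the measurability of $\inf_x g_k$ and $\sup_x g_k$ can be avoided by applying Hoeffding's lemma to the law of $X_k$ for each fixed prefix $(x_1,\ldots,x_{k-1})$. In the paper's application each $X_k$ is the destination slot of a fractional item, so it takes values in the finite set $[n]$ and the issue does not arise.
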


\section{Bundle prices}\label{sec:prices}

Fix a partition $\calB=\{B_1,\ldots,B_n\}$ and choose its slot labels. A price vector $\pi\in\R^n$ \emph{supports} $\calB$ if there is a matching $\sigma\in S_n$ such that
\begin{equation}\label{eq:supporting-prices}
  v_i(B_{\sigma(i)})-\pi_{\sigma(i)}\geq v_i(B_k)-\pi_k
  \qquad(i\in N,\ k\in[n]).
\end{equation}
At these prices, every agent receives a bundle that maximizes her value minus its price. Since every bundle is assigned, these are \emph{market-clearing prices} for $\calB$. Every agent must be matched to one slot and receives its bundle, which may be empty; leaving the market is not an option.

Any supported matching maximizes welfare: summing \cref{eq:supporting-prices} over agents and comparing with another matching cancels all prices. Moreover, prices supporting one optimal matching support every optimal matching. Each agent weakly prefers her bundle in the supported matching, but the utility differences sum to zero because both matchings have the same welfare and total price. Hence every agent is indifferent between her bundles in the two matchings. Supporting prices are therefore a property of the partition, independent of the optimal matching used to allocate it.

Only price differences matter, since adding the same constant to every coordinate preserves \cref{eq:supporting-prices}. We will use the coordinatewise least nonnegative supporting vector, whose entries are the \emph{canonical bundle prices} $q_j(\calB)$. To construct it, fix any maximum-weight matching $\sigma$ and write $a_j=\sigma^{-1}(j)$ for the agent matched to bundle $B_j$.

Define the complete directed \emph{weighted envy graph} $G_{\calB,\sigma}$ on the bundle slots~\citep{HalpernShah2019}, with arc weights
\begin{equation}\label{eq:envy-weight}
  w_{\calB,\sigma}(j,k)
  \defeq
  v_{a_j}(B_k)-v_{a_j}(B_j).
\end{equation}
Indexing vertices by slots is equivalent to indexing them by their assigned agents. A positive arc weight means that the agent assigned to the source envies the destination bundle; we retain zero and negative arc weights as well. Every directed cycle has nonpositive weight, since its weight is exactly the welfare gain from moving each agent on the cycle to the next bundle.

For each slot $k$, define
\begin{equation}\label{eq:q-path}
  q_k(\calB)
  \defeq
  \max\bigl\{\wt(P):P\text{ is a directed walk in }G_{\calB,\sigma}\text{ ending at }k\bigr\},
\end{equation}
where $\wt(P)$ is the sum of the arc weights and a walk of length zero has weight zero. The maximum is finite: deleting a repeated-vertex segment removes a nonpositive cycle and cannot decrease the weight. Hence a maximum is attained by a simple path.

\subsection{Canonical prices and subsidies}

The next lemma shows that $q$ gives the canonical prices and explains how to turn them into subsidies.

\begin{lemma}\label{lem:price-properties}
For every partition $\calB$ and every maximum-weight matching $\sigma$:
\begin{enumerate}
  \item $q(\calB)$ is the coordinatewise least nonnegative price vector supporting $\calB$. In particular, $q_k(\calB)\geq q_j(\calB)+w_{\calB,\sigma}(j,k)$ for all $j,k\in[n]$;
  \item $\min_kq_k(\calB)=0$;
  \item writing $q_{\max}(\calB)=\max_kq_k(\calB)$, the subsidies
  \[
    p_i=q_{\max}(\calB)-q_{\sigma(i)}(\calB)\qquad(i\in N)
  \]
  make the allocation $A_i=B_{\sigma(i)}$ envy-free with total subsidy at most $(n-1)q_{\max}(\calB)$.
\end{enumerate}
\end{lemma}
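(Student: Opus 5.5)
We will prove the three items in order. The basic tool is a translation of the supporting condition \cref{eq:supporting-prices} into potential inequalities on the envy graph. Take the matching $\sigma$ in \cref{eq:supporting-prices} to be the fixed maximum-weight matching; the paragraph before the lemma shows this is no loss of generality. Write $i=a_j$. Then \cref{eq:supporting-prices} reads
\[
  \pi_k\geq\pi_j+w_{\calB,\sigma}(j,k)\qquad\text{for all }j,k\in[n].
\]
So a vector supports $\calB$ exactly when it is a feasible potential for $G_{\calB,\sigma}$.

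For item~1, first check that $q$ satisfies these inequalities. If $P$ is a walk ending at $j$, then appending the arc $(j,k)$ gives a walk ending at $k$ of weight $\wt(P)+w(j,k)$. Taking the maximum over $P$ yields $q_k\geq q_j+w(j,k)$. The maximum is finite because every cycle has nonpositive weight, as already argued. The zero-length walk gives $q_k\geq0$, so $q$ is a nonnegative supporting vector. For minimality, let $\pi\geq0$ be any supporting vector and let $P=(j_0,\dots,j_r=k)$ be a walk. Telescoping the potential inequalities along $P$ gives $\pi_k\geq\pi_{j_0}+\wt(P)\geq\wt(P)$. Maximizing over $P$ gives $\pi_k\geq q_k$. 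Minimality does not depend on the choice of $\sigma$, since the set of supporting vectors does not.

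Item~2 is the only step that needs care, but it is short. Suppose every $q_k>0$, and let $\varepsilon=\min_kq_k>0$. Then $q-\varepsilon\mathbf{1}$ is still nonnegative and still supporting, because constant shifts preserve the potential inequalities. This contradicts the coordinatewise minimality from item~1. Hence $\min_kq_k=0$.

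For item~3, nonnegativity of $p_i$ is immediate. Fix agents $i,\ell$ and set $j=\sigma(i)$, $k=\sigma(\ell)$. Then
\[
  v_i(A_i)+p_i-v_i(A_\ell)-p_\ell=\bigl(v_i(B_j)-q_j\bigr)-\bigl(v_i(B_k)-q_k\bigr)\geq0
\]
by \cref{eq:supporting-prices}, so the allocation is envy-free. By item~2 some slot has price $0$, so the agent matched to it receives $q_{\max}$. More to the point, some slot $k^\ast$ attains $q_{\max}$, and its agent receives $p=0$. Every other $p_i$ is at most $q_{\max}$ since $q\geq0$. Therefore the total subsidy is at most $(n-1)q_{\max}$.
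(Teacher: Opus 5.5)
Your proof is correct and follows essentially the same route as the paper. It recasts support as the potential inequalities $\pi_k\geq\pi_j+w_{\calB,\sigma}(j,k)$, appends arcs and telescopes along walks for item~1, shifts down by the minimum for item~2, and gives the zero payment in item~3 to the agent on a slot attaining $q_{\max}$.
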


\begin{proof}
By \cref{eq:supporting-prices} and matching independence, a vector $x$ supports $\calB$ if and only if $x_k\geq x_j+w_{\calB,\sigma}(j,k)$ for all $j,k$. Appending the arc $(j,k)$ to a maximum-weight walk ending at $j$ shows that $q$ satisfies these inequalities.

To see that $q$ is the least such vector, let $x$ be any nonnegative supporting price vector. Summing the inequalities along any walk from $j$ to $k$ gives $x_k\geq x_j+\wt(P)\geq\wt(P)$, so $x$ dominates $q$. This proves the first claim. If $\min_kq_k>0$, subtracting this minimum from every coordinate of $q$ would give a strictly smaller nonnegative supporting vector, a contradiction.

Finally, give each agent a common allowance $q_{\max}$ and charge the price of her assigned bundle. The net payments $p_i=q_{\max}-q_{\sigma(i)}$ are nonnegative and satisfy
\[
  p_{a_j}-p_{a_k}
  =q_k-q_j
  \geq v_{a_j}(B_k)-v_{a_j}(B_j).
\]
Rearranging gives the envy-free inequality for agent $a_j$ comparing her bundle with $B_k$. Every payment is at most $q_{\max}$, and the agent matched to a slot attaining $q_{\max}$ receives zero payment, so the total is at most $(n-1)q_{\max}$.
\end{proof}

\subsection{Prices as marginal welfare gains}

The canonical price of a bundle equals the welfare gain from making an additional copy available. The identity below is the counterpart, for mandatory assignment, of the marginal-welfare characterization of minimum equilibrium prices~\citep[Theorem~4]{GulStacchetti1999}. We give a direct proof and then use it to bound price changes.

For $k\in[n]$, introduce an additional slot labeled $k'$, offering a formal copy of $B_k$. Every agent $i$ values the option at $k'$ at $v_i(B_k)$, exactly as at slot $k$. Let $W_k^+(\calB)$ be the maximum welfare of a matching that assigns all $n$ agents to distinct slots in $[n]\cup\{k'\}$. Exactly one slot is unused. This is an auxiliary matching problem: the original partition is unchanged, and no items are duplicated in the eventual allocation.

\begin{lemma}[Duplicate-bundle identity]\label{lem:duplicate}
For every partition $\calB$ and every $k\in[n]$,
\begin{equation}\label{eq:duplicate}
  q_k(\calB)=W_k^+(\calB)-W(\calB).
\end{equation}
In particular, with the slot labels fixed, the price $q_k(\calB)$ depends on the partition and the slot $k$, but not on the maximum-weight matching used to construct the envy graph.
\end{lemma}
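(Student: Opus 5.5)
We prove the two inequalities $q_k\le W_k^+-W$ and $q_k\ge W_k^+-W$ separately, working with a fixed maximum-weight matching $\sigma$ and its envy graph $G_{\calB,\sigma}$. Write $a_j=\sigma^{-1}(j)$ as before.

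\emph{Lower bound $W_k^+-W\ge q_k$.} Take a simple path $P=(j_0,j_1,\ldots,j_r=k)$ attaining $q_k$ in \cref{eq:q-path}. Modify $\sigma$ by moving the agents along the path: for $t=0,\ldots,r-2$, move agent $a_{j_t}$ to slot $j_{t+1}$. Move agent $a_{j_{r-1}}$ to the copy slot $k'$ rather than to $k$. Agent $a_k$ stays at $k$. All other agents are unchanged, and slot $j_0$ becomes the unused slot. This is a valid assignment into $[n]\cup\{k'\}$, because $k'$ offers $B_k$, so agent $a_{j_{r-1}}$ gains exactly $w(j_{r-1},k)$. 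The welfare change is therefore exactly $\wt(P)$, which gives $W_k^+\ge W+q_k$. For $r=0$, keeping $\sigma$ and leaving $k'$ unused gives $W_k^+\ge W$.

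\emph{Upper bound $W_k^+-W\le q_k$.} We use the dual characterization from \cref{lem:price-properties}: $q$ supports $\calB$, so $v_i(B_j)-q_j\le u_i\defeq v_i(B_{\sigma(i)})-q_{\sigma(i)}$ for all $i,j$. Extend the prices to the copy by setting $q_{k'}=q_k$, so $v_i(B_k)-q_{k'}\le u_i$ as well. Let $\tau$ be an optimal assignment for $W_k^+$, and let $s\in[n]\cup\{k'\}$ be its unused slot. Then
\[
  W_k^+=\sum_i\bigl(v_i(\text{slot }\tau(i))-q_{\tau(i)}\bigr)+\sum_{j\ne s}q_j\le\sum_iu_i+\sum_{j\in[n]}q_j+q_k-q_s .
\]
Since $\sum_iu_i+\sum_{j\in[n]}q_j=W$ and $q_s\ge 0$ by nonnegativity of canonical prices, this yields $W_k^+\le W+q_k$.

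Matching independence is then immediate: the right-hand side of \cref{eq:duplicate} is defined without reference to $\sigma$.

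The main point of care is the upper bound. It relies on using nonnegativity of $q$ to drop $q_s$, and on the prices supporting all agents against the copy slot. Both follow from \cref{lem:price-properties}. In the lower bound, the edge cases $r=0$ and $j_0=k$ are handled by the convention that the walk of length zero has weight zero; a simple path cannot revisit $k$.
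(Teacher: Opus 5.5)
Your proof is correct and follows essentially the same argument as the paper. The lower bound reroutes agents along a maximum-weight simple path into the copy slot; you send the last path agent to $k'$ and keep $a_k$ at $k$, whereas the paper sends $a_k$ to $k'$, which is equivalent since both slots offer $B_k$. The upper bound extends the prices by $q_{k'}=q_k$, sums the support inequalities over an optimal augmented matching, and drops $q_s\geq 0$ for the unused slot, exactly as in the paper.
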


\begin{proof}
Fix a maximum-weight matching $\sigma$, with agent $a_j=\sigma^{-1}(j)$ assigned to $B_j$. We prove the two inequalities separately.

Let $P=(j_0,\ldots,j_t)$ be a simple path ending at $j_t=k$. Move agent $a_k$ to slot $k'$, and for each $s<t$, move agent $a_{j_s}$ to slot $j_{s+1}$, where she receives $B_{j_{s+1}}$. Keep every other agent in place. This is a valid matching that leaves slot $j_0$ unused, and its welfare gain is $\wt(P)$. The construction also applies when $t=0$, with gain zero. Taking a maximum-weight path gives
\[
  W_k^+(\calB)-W(\calB)\geq q_k(\calB).
\]

For the reverse inequality, extend the price vector by setting $q_{k'}=q_k$. Since $q$ supports $\sigma$ and slots $k$ and $k'$ offer identical value and price, each agent's value minus price in any augmented matching is at most her value minus price under $\sigma$. Consider an optimal augmented matching and let $u\in[n]\cup\{k'\}$ be its unused slot. The prices of its assigned slots sum to $\sum_{j\in[n]}q_j+q_k-q_u$. Since $q_u\geq0$, summing the agents' utility inequalities gives
\[
  W_k^+(\calB)\leq W(\calB)+q_k-q_u\leq W(\calB)+q_k,
\]
which proves the identity.
\end{proof}

\subsection{Empty bundles and one-item changes}

We first record the only step in the proof that uses the sign of the valuations.

\begin{lemma}[Empty-bundle boundary]\label{lem:empty}
Suppose that $B_k=\emptyset$.
\begin{enumerate}
  \item For nonnegative valuations, $q_k(\calB)=0$.
  \item For nonpositive valuations, $q_k(\calB)=\max_jq_j(\calB)$.
\end{enumerate}
\end{lemma}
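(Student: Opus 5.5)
The plan is to work directly with the envy-graph definition \eqref{eq:q-path} together with the structural facts in \cref{lem:price-properties}. Fix a maximum-weight matching $\sigma$ and write $a_j=\sigma^{-1}(j)$. The key observation is that when $B_k=\emptyset$, every arc entering or leaving $k$ has a weight with a definite sign, because $v_i(\emptyset)=0$.

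\emph{Nonnegative case.} Since $v_{a_j}(\emptyset)=0\le v_{a_j}(B_j)$, every arc entering $k$ has weight
\[
  w_{\calB,\sigma}(j,k)=v_{a_j}(\emptyset)-v_{a_j}(B_j)=-v_{a_j}(B_j)\le 0.
\]
Take a maximum-weight walk ending at $k$; by the discussion after \eqref{eq:q-path} we may assume it is a simple path $(j_0,\dots,j_t=k)$. If $t=0$, its weight is $0$. Otherwise, write its weight as the weight of the prefix $(j_0,\dots,j_{t-1})$ plus $w(j_{t-1},k)$. The prefix weight is at most $q_{j_{t-1}}$, and the last arc has nonpositive weight, so this route only gives $q_k\le q_{j_{t-1}}$. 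That is too weak, so I will use the first claim of \cref{lem:price-properties}, that $q$ is the least nonnegative supporting vector. Define $x$ by $x_k=0$ and $x_j=q_j$ for $j\neq k$. I then check that $x$ supports $\calB$, i.e.\ $x_\ell\ge x_j+w(j,\ell)$ for all $j,\ell$. There are three kinds of inequality.
\begin{itemize}
  \item For $j,\ell\neq k$, the inequality is inherited from $q$.
  \item For $\ell=k$, it reads $0\ge q_j-v_{a_j}(B_j)$. This does not hold automatically, so this direction needs an argument.
  \item For $j=k$, it reads $q_\ell\ge 0+v_{a_k}(B_\ell)-0$. This follows from $q_\ell\ge q_k+w(k,\ell)$ and $q_k\ge0$.
\end{itemize}
The middle case is the obstacle, and I would bypass it with \cref{lem:duplicate}: $q_k=W_k^+-W$. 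In the augmented problem, the copy $k'$ of the empty bundle is worth exactly $0$ to every agent, the same as slot $k$. An optimal augmented matching leaves one slot $u$ unused. If $u\in\{k,k'\}$, the welfare is at most $W$. Otherwise, the agents on $k$ and $k'$ both receive $0$, and the remaining $n-2$ agents use $n-2$ of the slots in $[n]\setminus\{k\}$. Reassigning one of those two agents to $u$ gives a perfect matching on the original slots. Its welfare is $W_k^+$ plus that agent's value for $B_u$, which is $\ge0$ by nonnegativity. Hence $W_k^+\le W$, and since $q_k\ge0$ we conclude $q_k=0$.

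\emph{Nonpositive case.} I again use \cref{lem:duplicate}, now for an arbitrary slot $j$ to show $q_j\le q_k$, i.e.\ $W_j^+\le W_k^+$. Take an optimal matching for the problem with the duplicate $j'$. If it leaves $j$ or $j'$ unused, it is feasible (up to relabeling) for the $W_k^+$ problem, so $W_j^+\le W_k^+$ in that case. Otherwise both copies of $B_j$ are used and some slot $u$ is unused. Move the agent on $j'$ to the slot $k'$. Her value changes from $v(B_j)\le 0$ to $v(\emptyset)=0$, so welfare does not decrease. The result is a feasible matching for the $W_k^+$ problem, so $W_j^+\le W_k^+$. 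Thus $q_k=\max_j q_j$.

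The main step I expect to need care is the first case: the slot exchange must respect the one-unused-slot structure, with $k$ and $k'$ symmetric. I will handle this via the case split on the unused slot described above.
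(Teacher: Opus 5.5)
Your final argument is correct, but it takes a different route from the paper. The paper stays in the envy graph and uses only the potential inequality $q_\ell\geq q_j+w_{\calB,\sigma}(j,\ell)$ from \cref{lem:price-properties}. For nonnegative valuations, the arcs \emph{leaving} the empty slot satisfy $w_{\calB,\sigma}(k,\ell)=v_{a_k}(B_\ell)\geq0$. So $q_\ell\geq q_k$ for all $\ell$, and hence $q_k=\min_\ell q_\ell=0$. For nonpositive valuations, the arcs \emph{entering} the empty slot satisfy $w_{\calB,\sigma}(j,k)=-v_{a_j}(B_j)\geq0$, so $q_k\geq q_j$.

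Your detour through the vector $x$ was unnecessary. Your third bullet already contains the paper's proof once you observe that $v_{a_k}(B_\ell)\geq0$. You first examined incoming arcs, which is the useful direction only in the chores case.

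Your replacement argument instead proves both claims as welfare comparisons via \cref{lem:duplicate}. First, $W_k^+\leq W$: if both copies of $\emptyset$ are used, the agent on $k'$ can move to the unused slot $u$ and gain $v_i(B_u)\geq0$. Second, $W_j^+\leq W_k^+$: the agent on $j'$ can move to $k'$ and gain $-v_i(B_j)\geq0$. Both exchanges are sound, including your case split on the unused slot. The cases where $j$ or $j'$ is unused reduce to $W_j^+\leq W\leq W_k^+$.

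What your route buys is a marginal-welfare reading of the boundary condition. An extra empty bundle is worthless when all bundles are weakly good, and it is the most valuable duplicate when all are weakly bad. The paper's route is shorter and depends only on the first two parts of \cref{lem:price-properties}, rather than on the heavier duplicate-bundle identity.
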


\begin{proof}
For nonnegative valuations, every outgoing arc from the empty slot has nonnegative weight:
\[
  w_{\calB,\sigma}(k,j)=v_{a_k}(B_j)-v_{a_k}(\emptyset)=v_{a_k}(B_j)\geq0.
\]
By \cref{lem:price-properties}, $q_j\geq q_k+w_{\calB,\sigma}(k,j)\geq q_k$ for every $j$. Thus $q_k$ is a minimum price and equals zero.

For nonpositive valuations, every incoming arc to the empty slot has nonnegative weight:
\[
  w_{\calB,\sigma}(j,k)=v_{a_j}(\emptyset)-v_{a_j}(B_j)=-v_{a_j}(B_j)\geq0.
\]
Hence $q_k\geq q_j+w_{\calB,\sigma}(j,k)\geq q_j$ for every $j$, as required.
\end{proof}

Next, we use the matching identity to control how much a price changes when one item moves. This bound holds regardless of the signs of the valuations.

\begin{lemma}[One-item stability]\label{lem:lipschitz}
If two partitions $\calB$ and $\calB'$ differ only by moving one item from one bundle to another, with slot labels preserved, then
\begin{equation}\label{eq:lipschitz}
  |q_k(\calB)-q_k(\calB')|\leq5
  \qquad\text{for every }k\in[n].
\end{equation}
\end{lemma}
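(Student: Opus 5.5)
We plan to deduce the bound from the duplicate-bundle identity of \cref{lem:duplicate}. That identity gives $q_k(\calB)=W_k^+(\calB)-W(\calB)$ and $q_k(\calB')=W_k^+(\calB')-W(\calB')$. Hence
\[
  |q_k(\calB)-q_k(\calB')|\leq|W(\calB)-W(\calB')|+|W_k^+(\calB)-W_k^+(\calB')|,
\]
and it suffices to show that $W$ moves by at most $2$ and $W_k^+$ by at most $3$. Both bounds are proved by fixing an optimal matching for one partition and evaluating the same matching, with slot labels fixed, under the other partition. Symmetry in $\calB,\calB'$ then gives the absolute value.

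Suppose the item $g$ moves from slot $a$ to slot $b$, so that $B'_a=B_a\setminus\{g\}$, $B'_b=B_b\cup\{g\}$, and all other bundles agree. For $W$, take a maximum-weight perfect matching $\sigma$ for $\calB$ and use the same $\sigma$ for $\calB'$. Only the agents assigned to slots $a$ and $b$ see a change in their bundles. Each such change is the removal or addition of the single item $g$, so by the marginal bound \cref{eq:marginal-bound} each value changes by at most $1$. Thus $W(\calB')\geq W(\calB)-2$, and by symmetry $|W(\calB)-W(\calB')|\leq2$.

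For $W_k^+$, take an optimal augmented matching for $\calB$ on the slots $[n]\cup\{k'\}$ and reuse it for $\calB'$, where slot $k'$ now offers a copy of $B'_k$. The slots whose contents change are $a$ and $b$, together with $k'$ when $k\in\{a,b\}$. At most three agents therefore experience a one-item change, and each loses at most $1$. This gives $W_k^+(\calB')\geq W_k^+(\calB)-3$, and symmetry yields $|W_k^+(\calB)-W_k^+(\calB')|\leq3$. Adding the two bounds gives \cref{eq:lipschitz}.

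The step that needs the most care is the bookkeeping in the augmented problem. One must check that when $k\in\{a,b\}$ both slot $k$ and its copy $k'$ change, but no further slot does, so at most three affected agents arise; when $k\notin\{a,b\}$, only two are affected. One must also confirm that reusing a matching is legitimate, since the matching is a bijection on slot labels, which are preserved, and that the duplicate slot in $\calB'$ copies $B'_k$ rather than $B_k$. The degenerate case $a=b$ is trivial. Everything else follows directly from \cref{lem:duplicate} and the marginal bound.
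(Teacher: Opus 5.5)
Your proposal is correct and follows essentially the same route as the paper. In both, the duplicate-bundle identity reduces the claim to $|W(\calB)-W(\calB')|\leq 2$ and $|W_k^+(\calB)-W_k^+(\calB')|\leq 3$. Each of these is proved by reusing an optimal matching across the two partitions and counting the at most two (respectively three, including $k'$ when $k$ is a changed slot) assigned slots whose contents change by one item.
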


\begin{proof}
Suppose that item $g$ moves from the bundle in slot $r$ to the bundle in slot $s$. For any fixed matching to the original slots, only the values of the agents assigned to $r$ and $s$ change. Each changes by at most one, by \cref{eq:marginal-bound}, so the matching's welfare changes by at most $2$. The same bound holds for the maximum over all matchings: an optimal matching for either partition is a feasible matching for the other. Thus
\[
  |W(\calB)-W(\calB')|\leq2.
\]

For a fixed matching to the augmented slots defining $W_k^+$, at most three used slots have changed values: slots $r$ and $s$, and slot $k'$ if $k\in\{r,s\}$. Each assigned value changes by at most one. Maximizing over these matchings gives
\[
  |W_k^+(\calB)-W_k^+(\calB')|\leq3.
\]
The identity in \cref{lem:duplicate} and the triangle inequality now give $|q_k(\calB)-q_k(\calB')|\leq3+2=5$.
\end{proof}

\section{Equalizing expected prices}\label{sec:equalization}

We now construct a distribution over partitions in which all canonical prices have the same expectation and at most $n-1$ items have random destinations. Limiting the randomness to these items will make the final bound independent of $m$.

\subsection{Partitions from interval cuts}

Arrange the items in an arbitrary order and represent $g_t$ by the unit interval $I_t=[t-1,t]$. A \emph{piece} is an interval in an auxiliary division of $[0,m]$, used only to specify assignment probabilities. Divide $[0,m]$ into $n$ consecutive pieces, allowing pieces of length zero, and label them from left to right by $k\in[n]$. Their lengths form the vector
\[
  x\in\Delta_m
  \defeq\bigl\{x\in\R_{\geq0}^n:\textstyle\sum_kx_k=m\bigr\}.
\]
Set $s_0=0$ and $s_k=\sum_{\ell=1}^kx_\ell$, so piece $k$ is $J_k(x)=[s_{k-1},s_k]$. Note that if $x_k=0$ then the length of 
$J_k(x)$ is zero.
Associate piece $J_k(x)$ with slot $k$. Assign each whole item $g_t$ independently to exactly one slot, choosing slot $k$ with probability
\begin{equation}\label{eq:rounding-probabilities}
  \alpha_{t,k}(x)
  \defeq\operatorname{length}\bigl(I_t\cap J_k(x)\bigr)
  =\max\{0,\min\{t,s_k\}-\max\{t-1,s_{k-1}\}\}.
\end{equation}
These probabilities are nonnegative and sum to one for each item. Write
\[
  B_k(x)\defeq\{g_t\in M:g_t\text{ is assigned to slot }k\},
  \qquad
  R(x)\defeq\{B_1(x),\ldots,B_n(x)\}.
\]
Thus $B_k(x)$ is the random \emph{bundle} corresponding to the geometric \emph{piece} $J_k(x)$: the piece determines the assignment probabilities, and the bundle contains the whole items assigned to its slot. The partition $R(x)$ is unordered; its auxiliary slot labels are retained when evaluating indexed prices. Define
\begin{equation}\label{eq:Q}
  Q_k(x)\defeq\E\bigl[q_k(R(x))\bigr].
\end{equation}
Here $Q_k(x)$ is the expected price of the bundle in slot $k$, not a price assigned to the interval $J_k(x)$. We evaluate valuations only on bundles of whole items and prices only on the resulting partitions. 

Call an item \emph{fractional} at $x$ if it has positive assignment probability to at least two slots. At least one of the cuts \(s_1,\ldots,s_{n-1}\) lies strictly inside its unit interval. Each cut is strictly inside at most one item, so there are at most $n-1$ fractional items. A cut at an integer endpoint does not make an item fractional, and several cuts inside one item still randomize only that item.

Each function $Q_k$ is continuous. To see this explicitly, for a map $b:[m]\to[n]$, set $B_j(b)=\{g_t:b(t)=j\}$ and $\calB(b)=\{B_1(b),\ldots,B_n(b)\}$, retaining the labels specified by $b$. Independence gives
\[
  Q_k(x)=\sum_{b:[m]\to[n]}q_k(\calB(b))
  \prod_{t=1}^m\alpha_{t,b(t)}(x).
\]
This is a finite sum of products of continuous functions, since every $\alpha_{t,k}$ is continuous by \cref{eq:rounding-probabilities}.

If $x_k=0$, then $J_k(x)$ has length zero, and $\alpha_{t,k}(x)=0$ for every item, so the corresponding bundle $B_k(x)$ is empty with probability one. A piece of positive length may also yield an empty bundle after rounding. For nonnegative valuations, \cref{lem:empty} gives $q_k(R(x))=0$  when $x_k=0$, so $Q_k(x)=0$ with probability one. For nonpositive valuations, the same lemma gives $q_k(R(x))\geq q_j(R(x))$  on this boundary, for every $j$. Taking expectations preserves these inequalities. We therefore have the boundary conditions
\begin{equation}\label{eq:Q-boundary}
  x_k=0\quad\Longrightarrow\quad
  \begin{cases}
    Q_k(x)=0, & \text{for nonnegative valuations},\\
    Q_k(x)=\max_jQ_j(x), & \text{for nonpositive valuations}.
  \end{cases}
\end{equation}

\subsection{A KKM equalization lemma}

The following consequence of KKM turns these boundary conditions into equal expectations.

\begin{lemma}[KKM equalization]\label{lem:kkm-equalization}
Let $F_1,\ldots,F_n:\Delta_m\to\R_{\geq0}$ be continuous, and suppose that $F_k(x)=0$ whenever $x_k=0$. Then there is an $x^*\in\Delta_m$ such that
\[
  F_1(x^*)=\cdots=F_n(x^*).
\]
\end{lemma}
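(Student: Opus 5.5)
Rescale so that $\Delta_m$ becomes the standard simplex $\Delta$ via $x\mapsto x/m$; continuity and the boundary condition are preserved. Then define closed sets adapted to KKM. The natural choice is
\[
  C_k\defeq\bigl\{x\in\Delta_m: F_k(x)\geq F_j(x)\text{ for all }j\in[n]\bigr\},
\]
the set where $F_k$ is maximal. However, a common point of all $C_k$ would give equality of all $F_k$, which is exactly what we want. So we need the KKM covering condition $\Delta_S\subseteq\bigcup_{k\in S}C_k$. Each $C_k$ is closed because it is a finite intersection of sets $\{F_k-F_j\geq0\}$ of continuous functions.

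To check the covering condition, take $x\in\Delta_S$, so $x_j=0$ for $j\notin S$ and hence $F_j(x)=0$ for $j\notin S$. Pick $k$ maximizing $F_k(x)$ over all of $[n]$. If some maximizer lies in $S$, we are done. Otherwise the maximum value equals $F_j(x)=0$ for some $j\notin S$. Since all $F_\ell\geq0$, every $F_\ell(x)$ is then $0$, so every index, in particular any $k\in S$ (and $S$ is nonempty), is a maximizer and $x\in C_k$. Thus $\Delta_S\subseteq\bigcup_{k\in S}C_k$.

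Applying \cref{thm:kkm} gives $x^*\in\bigcap_kC_k$, where every $F_k(x^*)$ equals the maximum, so $F_1(x^*)=\cdots=F_n(x^*)$. The only delicate point is the covering argument, where nonnegativity together with vanishing on the facet $x_k=0$ is what forces a maximizer inside $S$. Everything else is routine.
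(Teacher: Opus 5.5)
Your proof is correct and follows the paper's argument: the same sets $C_k$ on which $F_k$ attains the maximum, the same covering check using nonnegativity and vanishing on $\{x_k=0\}$, and a direct application of KKM after rescaling. The only detail the paper adds is the degenerate case $m=0$: there your rescaling by $1/m$ is undefined, but $\Delta_m=\{0\}$ and all $F_k$ vanish at that point, so the claim holds trivially.
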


\begin{proof}
If $m=0$, the domain consists of the zero vector and all functions vanish there. Suppose that $m>0$. For each $k$, define the closed set
\[
  C_k=\bigl\{x\in\Delta_m:F_k(x)=\max_jF_j(x)\bigr\}.
\]
Consider the face supported on a nonempty set $S\subseteq[n]$.
On this face, $F_k=0$ for every $k\notin S$. Since all functions
are nonnegative, their maximum at each point is attained by some
index in $S$, even when all functions vanish. Hence the face is
covered by $\bigcup_{k\in S}C_k$. Rescaling $\Delta_m$ to the unit
simplex $\Delta$ and applying the KKM theorem (\cref{thm:kkm})
yields a point in $\bigcap_{k\in[n]}C_k$, at which
$F_1=\cdots=F_n$.
\end{proof}

\begin{proposition}[Equal expected prices]\label{prop:fractional}
There is an $x^*\in\Delta_m$ such that
\[
  Q_1(x^*)=\cdots=Q_n(x^*).
\]
In $R(x^*)$, at most $n-1$ items have random bundle membership, independently of one another; every other item belongs to a fixed bundle.
\end{proposition}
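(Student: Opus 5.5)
The plan is to apply \cref{lem:kkm-equalization} to a suitable family of nonnegative continuous functions built from the $Q_k$, with the choice depending on the sign class. The structural claim about fractional items then follows from the discussion preceding the statement.

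\emph{Nonnegative valuations.} Take $F_k=Q_k$. Each $Q_k$ is continuous, being the finite sum of products of the continuous $\alpha_{t,k}$ displayed above. Each $Q_k$ is nonnegative, since canonical prices are nonnegative by \cref{lem:price-properties}. By \cref{eq:Q-boundary}, $Q_k(x)=0$ whenever $x_k=0$. \Cref{lem:kkm-equalization} therefore gives $x^*$ with $Q_1(x^*)=\cdots=Q_n(x^*)$.

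\emph{Nonpositive valuations.} Set $F_k(x)=\max_jQ_j(x)-Q_k(x)$.
\begin{itemize}
  \item Each $F_k$ is continuous, as the difference of a maximum of finitely many continuous functions and a continuous function.
  \item Each $F_k$ is nonnegative by construction.
  \item By \cref{eq:Q-boundary}, $x_k=0$ implies $Q_k(x)=\max_jQ_j(x)$, so $F_k(x)=0$.
\end{itemize}
\Cref{lem:kkm-equalization} yields $x^*$ with all $F_k(x^*)$ equal to a common value $c\geq0$. Some index attains the maximum, and its $F$-value is $0$, so $c=0$. Hence every $Q_k(x^*)$ equals $\max_jQ_j(x^*)$, and the $Q_k(x^*)$ are all equal.

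\emph{Structure of $R(x^*)$.} Items are assigned independently by construction. An item whose interval meets only one piece in positive length has $\alpha_{t,k}(x^*)=1$ for a single $k$. Such an item therefore lies in a fixed bundle almost surely.

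An item is fractional only if some cut $s_1,\ldots,s_{n-1}$ lies strictly inside its interval $(t-1,t)$. These open intervals are disjoint, so each cut makes at most one item fractional. This gives at most $n-1$ fractional items, as argued before the statement.

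The only delicate point is the nonpositive case. There one must check that the pointwise maximum preserves continuity and that equality of the $F_k$ forces their common value to be zero. Both facts are immediate, so I expect no real obstacle.
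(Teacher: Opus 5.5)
Your proposal is correct and matches the paper's proof: $F_k=Q_k$ for nonnegative valuations, and $F_k=\max_jQ_j-Q_k$ for nonpositive valuations, with the common value forced to zero because some index attains the maximum, both via \cref{lem:kkm-equalization}; the bound on fractional items comes from the interval construction. You spell out a bit more than the paper does, namely why a non-fractional item lands in a fixed bundle and why distinct fractional items need distinct cuts, and those additions are sound.
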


\begin{proof}
For nonnegative valuations, the functions $Q_k$ are nonnegative, continuous, and zero when $x_k=0$. Apply \cref{lem:kkm-equalization} with $F_k=Q_k$.

For nonpositive valuations, set
\[
  F_k(x)=\max_jQ_j(x)-Q_k(x).
\]
These functions are continuous and nonnegative, and \cref{eq:Q-boundary} gives $F_k(x)=0$ whenever $x_k=0$. By \cref{lem:kkm-equalization}, they are all equal at some $x^*$. At every point, at least one $F_k$ equals zero, so their common value at $x^*$ is zero. Therefore all $Q_k(x^*)$ are equal.

The statement about random items follows from the interval construction.
\end{proof}

\section{Rounding and subsidies}\label{sec:rounding}

We now draw an integral partition from the distribution in \cref{prop:fractional}. Concentration around the common expected price bounds the spread of the realized prices. Their minimum is zero, so the same bound controls the maximum price and hence the subsidies.

\begin{proof}[Proof of \cref{thm:main}]
If $n=1$, assign every item to the only agent and pay no subsidy. If $M=\emptyset$, give every agent the empty bundle and again pay no subsidy. Hence assume that $n\geq2$ and $m\geq1$.

Choose $x^*$ as in \cref{prop:fractional}, and write
\[
  \mu=Q_1(x^*)=\cdots=Q_n(x^*).
\]
Let $H$ be the set of fractional items and $h=|H|\leq n-1$. If $h=0$, the partition is deterministic and all its prices equal $\mu$. By \cref{lem:price-properties}, one price is zero, and hence they all are. A maximum-weight matching then gives an envy-free allocation with zero subsidies.

Now suppose that $h\geq1$. The price $q_k(R(x^*))$ is a function of the independent destinations of the $h$ fractional items. Changing one destination moves one item between bundles, so \cref{lem:lipschitz} bounds the change in this function by $5$. McDiarmid's inequality gives
\begin{equation}\label{eq:q-concentration}
  \Prb\bigl[|q_k(R(x^*))-\mu|\geq t\bigr]
  \leq2\exp\!\left(-\frac{2t^2}{25h}\right).
\end{equation}
Set
\[
  t=5\sqrt{\frac{h}{2}\log(4n)}.
\]
For each $k$, the probability in \cref{eq:q-concentration} is at most $1/(2n)$. A union bound shows that, with probability at least $1/2$, all $n$ prices differ from $\mu$ by less than $t$. Fix a realized partition $\widetilde{\calB}=\{\widetilde{B}_1,\ldots,\widetilde{B}_n\}$ with this property, where $\widetilde{B}_k$ is the realized bundle in slot $k$.

Since the minimum canonical price is zero by \cref{lem:price-properties}, there exists $j\in [n]$ such that  $q_j(\widetilde{\calB})=0$. Hence, $|q_j(\widetilde{\calB}) - \mu |= \mu \leq t$. As every canonical price is  within $t$ of $\mu$, it follows that
\[
  q_{\max}(\widetilde{\calB})
  \leq2t
  =5\sqrt{2h\log(4n)}
  \leq5\sqrt{2(n-1)\log(4n)}.
\]
Finally, choose a maximum-weight matching $\sigma$ for $\widetilde{\calB}$. Form the allocation $A_i=\widetilde{B}_{\sigma(i)}$ and pay agent $i$ the subsidy $p_i=q_{\max}(\widetilde{\calB})-q_{\sigma(i)}(\widetilde{\calB})$. By \cref{lem:price-properties}, the resulting outcome is envy-free, each agent receives at most $q_{\max}(\widetilde{\calB})$, and
\[
  \sum_{i\in N}p_i
  \leq(n-1)q_{\max}(\widetilde{\calB})
  \leq5\sqrt{2}\,(n-1)^{3/2}\sqrt{\log(4n)}.
\]
This proves the theorem.
\end{proof}

\section{Discussion}\label{sec:discussion}
We have prioritized a simple, unified analysis over optimizing the
constants in the subsidy bound. A more careful accounting in our
proof yields a total-subsidy bound of
\(
  (n-1)\sqrt{(2n+3)\log n}
\)
when valuations are either all nonnegative or all nonpositive.
For monotone goods or monotone chores, this bound improves to
\(
  (n-1)\sqrt{\frac{n+5}{2}\log n}.
\)

\paragraph{Beyond the sign assumption.}
The sign assumption enters our proof only through the empty-bundle
boundary condition: an empty bundle has minimum canonical price for
nonnegative valuations and maximum canonical price for nonpositive
valuations. This boundary condition enables the KKM argument to
equalize expected prices. All other ingredients---the matching
identity, one-item stability, and concentration---apply to arbitrary
valuations with bounded single-item marginals. Extending our approach
to mixed-sign valuations for every number of agents therefore calls
for an alternative way to equalize expected prices.

\paragraph{Closing the gap.}
The single-good example gives a lower bound of $n-1$, leaving a gap
between this linear lower bound and our upper bound of
$O(n^{3/2}\sqrt{\log n})$. In our proof, the factor $\sqrt{\log n}$
arises from the union bound over the $n$ prices. Can a different
rounding argument remove this factor? More broadly, does a linear
total-subsidy bound hold for general monotone valuations?

\paragraph{Efficient computation.}

A further question is whether our total-subsidy bound of
$O(n^{3/2}\sqrt{\log n})$ can be achieved by a polynomial-time
algorithm. For a fixed partition, the canonical prices can be
computed efficiently from the agent--bundle values by solving the
matching problems in \cref{lem:duplicate}. The challenge lies in
finding a suitable partition. Our KKM argument establishes the
existence of an equalizing point $x^*$ but does not provide a
polynomial-time procedure for finding it. Moreover, direct
evaluation of the expected-price functions may require summing
over exponentially many rounding outcomes. An efficient
implementation of our approach would need to overcome both
obstacles.

\bigskip

\noindent\textbf{Acknowledgments.} Mashbat Suzuki is supported by the ARC Laureate Project FL200100204 on ``Trustworthy AI''.

\bigskip

\noindent\textbf{Declaration of generative AI use.}
All research results in this paper were obtained by the authors
without AI assistance. GPT-5.6 Sol was used solely to refine the
exposition, improving the clarity, phrasing, and presentation
of the manuscript.

\begingroup
\small
\interlinepenalty=10000
\bibliographystyle{plainnat}
\bibliography{subquadratic-subsidies-nonnegative-nonpositive}

\begin{thebibliography}{17}
\providecommand{\natexlab}[1]{#1}
\providecommand{\url}[1]{\texttt{#1}}
\expandafter\ifx\csname urlstyle\endcsname\relax
  \providecommand{\doi}[1]{doi: #1}\else
  \providecommand{\doi}{doi: \begingroup \urlstyle{rm}\Url}\fi

\bibitem[Avvakumov and Karasev(2021)]{avvakumov2021envy}
Sergey Avvakumov and Roman Karasev.
\newblock Envy-free division using mapping degree.
\newblock \emph{Mathematika}, 67\penalty0 (1):\penalty0 36--53, 2021.
\newblock \doi{10.1112/mtk.12059}.

\bibitem[Barman et~al.(2022)Barman, Krishna, Narahari, and
  Sadhukhan]{BarmanEtAl2022}
Siddharth Barman, Anand Krishna, Yadati Narahari, and Soumyarup Sadhukhan.
\newblock Achieving envy-freeness with limited subsidies under dichotomous
  valuations.
\newblock In \emph{Proceedings of the Thirty-First International Joint
  Conference on Artificial Intelligence}, pages 60--66. International Joint
  Conferences on Artificial Intelligence Organization, 2022.
\newblock \doi{10.24963/ijcai.2022/9}.
\newblock URL \url{https://www.ijcai.org/proceedings/2022/9}.

\bibitem[Brustle et~al.(2020)Brustle, Dippel, Narayan, Suzuki, and
  Vetta]{BrustleEtAl2020}
Johannes Brustle, Jack Dippel, Vishnu~V. Narayan, Mashbat Suzuki, and Adrian
  Vetta.
\newblock One dollar each eliminates envy.
\newblock In \emph{Proceedings of the 21st ACM Conference on Economics and
  Computation}, pages 23--39. ACM, 2020.
\newblock \doi{10.1145/3391403.3399447}.

\bibitem[Demange et~al.(1986)Demange, Gale, and
  Sotomayor]{DemangeGaleSotomayor1986}
Gabrielle Demange, David Gale, and Marilda Sotomayor.
\newblock Multi-item auctions.
\newblock \emph{Journal of Political Economy}, 94\penalty0 (4):\penalty0
  863--872, 1986.
\newblock \doi{10.1086/261411}.

\bibitem[{Dupr{\'e} la Tour} and Fujii(2025)]{DupreFujii2025}
Max {Dupr{\'e} la Tour} and Kaito Fujii.
\newblock Discrepancy and fair division for non-additive valuations.
\newblock \emph{arXiv preprint arXiv:2509.16802}, 2025.
\newblock \doi{10.48550/arXiv.2509.16802}.
\newblock URL \url{https://arxiv.org/abs/2509.16802}.

\bibitem[Goko et~al.(2024)Goko, Igarashi, Kawase, Makino, Sumita, Tamura,
  Yokoi, and Yokoo]{GokoEtAl2024}
Hiromichi Goko, Ayumi Igarashi, Yasushi Kawase, Kazuhisa Makino, Hanna Sumita,
  Akihisa Tamura, Yu~Yokoi, and Makoto Yokoo.
\newblock A fair and truthful mechanism with limited subsidy.
\newblock \emph{Games and Economic Behavior}, 144:\penalty0 49--70, 2024.
\newblock \doi{10.1016/j.geb.2023.12.006}.

\bibitem[Gul and Stacchetti(1999)]{GulStacchetti1999}
Faruk Gul and Ennio Stacchetti.
\newblock {Walrasian} equilibrium with gross substitutes.
\newblock \emph{Journal of Economic Theory}, 87\penalty0 (1):\penalty0 95--124,
  1999.
\newblock \doi{10.1006/jeth.1999.2531}.

\bibitem[Halpern and Shah(2019)]{HalpernShah2019}
Daniel Halpern and Nisarg Shah.
\newblock Fair division with subsidy.
\newblock In \emph{Algorithmic Game Theory: 12th International Symposium, SAGT
  2019}, pages 374--389. Springer, 2019.
\newblock \doi{10.1007/978-3-030-30473-7_25}.

\bibitem[Joji{\'c} et~al.(2021)Joji{\'c}, Panina, and
  {\v{Z}}ivaljevi{\'c}]{jojic2021splitting}
Du{\v{s}}ko Joji{\'c}, Gaiane Panina, and Rade {\v{Z}}ivaljevi{\'c}.
\newblock Splitting necklaces, with constraints.
\newblock \emph{SIAM Journal on Discrete Mathematics}, 35\penalty0
  (2):\penalty0 1268--1286, 2021.
\newblock \doi{10.1137/20M1331949}.

\bibitem[Kawase et~al.(2024)Kawase, Makino, Sumita, Tamura, and
  Yokoo]{KawaseEtAl2024}
Yasushi Kawase, Kazuhisa Makino, Hanna Sumita, Akihisa Tamura, and Makoto
  Yokoo.
\newblock Towards optimal subsidy bounds for envy-freeable allocations.
\newblock \emph{Proceedings of the AAAI Conference on Artificial Intelligence},
  38\penalty0 (9):\penalty0 9824--9831, 2024.
\newblock \doi{10.1609/aaai.v38i9.28842}.

\bibitem[Knaster et~al.(1929)Knaster, Kuratowski, and Mazurkiewicz]{KKM1929}
Bronis{\l}aw Knaster, Kazimierz Kuratowski, and Stefan Mazurkiewicz.
\newblock {Ein Beweis des Fixpunktsatzes f{\"u}r $n$-dimensionale Simplexe}.
\newblock \emph{Fundamenta Mathematicae}, 14\penalty0 (1):\penalty0 132--137,
  1929.

\bibitem[Li et~al.(2025)Li, Sun, Suzuki, and Xing]{LiEtAl2025}
Bo~Li, Ankang Sun, Mashbat Suzuki, and Shiji Xing.
\newblock On the subsidy of envy-free orientations in graphs.
\newblock arXiv preprint arXiv:2502.13671, 2025.
\newblock URL \url{https://arxiv.org/abs/2502.13671}.

\bibitem[Liu et~al.(2024)Liu, Lu, Suzuki, and Walsh]{LiuEtAl2024}
Shengxin Liu, Xinhang Lu, Mashbat Suzuki, and Toby Walsh.
\newblock Mixed fair division: A survey.
\newblock \emph{Journal of Artificial Intelligence Research}, 80:\penalty0
  1373--1406, 2024.
\newblock \doi{10.1613/jair.1.15800}.
\newblock URL \url{https://jair.org/index.php/jair/article/view/15800}.

\bibitem[Lu et~al.(2026)Lu, Mackenzie, and
  Suzuki]{lu2026optimalsubsidyboundsgoods}
Xinhang Lu, Simon Mackenzie, and Mashbat Suzuki.
\newblock Optimal subsidy bounds for goods and chores: One dollar each
  suffices.
\newblock arXiv preprint arXiv:2607.10089, 2026.
\newblock URL \url{https://arxiv.org/abs/2607.10089}.

\bibitem[McDiarmid(1989)]{McDiarmid1989}
Colin McDiarmid.
\newblock On the method of bounded differences.
\newblock In \emph{Surveys in Combinatorics, 1989}, volume 141 of \emph{London
  Mathematical Society Lecture Note Series}, pages 148--188. Cambridge
  University Press, 1989.

\bibitem[Shapley and Shubik(1971)]{ShapleyShubik1971}
Lloyd~S. Shapley and Martin Shubik.
\newblock The assignment game {I}: The core.
\newblock \emph{International Journal of Game Theory}, 1\penalty0 (1):\penalty0
  111--130, 1971.
\newblock \doi{10.1007/BF01753437}.

\bibitem[Su(1999)]{Su1999}
Francis~Edward Su.
\newblock Rental harmony: {Sperner}'s lemma in fair division.
\newblock \emph{The American Mathematical Monthly}, 106\penalty0 (10):\penalty0
  930--942, 1999.
\newblock \doi{10.2307/2589747}.

\end{thebibliography}
\endgroup

\end{document}